
\documentclass[conference]{IEEEtran}
% Some Computer Society conferences also require the compsoc mode option,
% but others use the standard conference format.
%
% If IEEEtran.cls has not been installed into the LaTeX system files,
% manually specify the path to it like:
% \documentclass[conference]{../sty/IEEEtran}
\usepackage{amsthm}  %include the proof enviorment
\usepackage{amsmath}
\usepackage{cite}
\usepackage{graphicx}
\usepackage{float}
\usepackage{times}
\usepackage{latexsym}
\usepackage{bm}
\usepackage{amssymb}
\usepackage{stfloats}
\usepackage{array}
\usepackage{fancyhdr}
\usepackage{cite,graphicx,amssymb}
\usepackage{citesort}
\usepackage{psfrag}
\usepackage{multirow}

\usepackage{color}
\usepackage{stfloats}
\usepackage{bm}  % for boldfont of \alpha,\beta..so on--instituted by boldsymbol
\usepackage{upgreek}  % for straighforward font of greek characters

\newtheorem{theorem}{Theorem}

\newtheorem{lemma}{Lemma}

\newtheorem{corollary}{Corollary}

\newtheorem{proposition}{Proposition}

\newtheorem{remark}{Remark}

% Some very useful LaTeX packages include:
% (uncomment the ones you want to load)

% *** MISC UTILITY PACKAGES ***
%
%\usepackage{ifpdf}
% Heiko Oberdiek's ifpdf.sty is very useful if you need conditional
% compilation based on whether the output is pdf or dvi.
% usage:
% \ifpdf
%   % pdf code
% \else
%   % dvi code
% \fi
% The latest version of ifpdf.sty can be obtained from:
% http://www.ctan.org/pkg/ifpdf
% Also, note that IEEEtran.cls V1.7 and later provides a builtin
% \ifCLASSINFOpdf conditional that works the same way.
% When switching from latex to pdflatex and vice-versa, the compiler may
% have to be run twice to clear warning/error messages.

% *** CITATION PACKAGES ***
%
%\usepackage{cite}
% cite.sty was written by Donald Arseneau
% V1.6 and later of IEEEtran pre-defines the format of the cite.sty package
% \cite{} output to follow that of the IEEE. Loading the cite package will
% result in citation numbers being automatically sorted and properly
% "compressed/ranged". e.g., [1], [9], [2], [7], [5], [6] without using
% cite.sty will become [1], [2], [5]--[7], [9] using cite.sty. cite.sty's
% \cite will automatically add leading space, if needed. Use cite.sty's
% noadjust option (cite.sty V3.8 and later) if you want to turn this off
% such as if a citation ever needs to be enclosed in parenthesis.
% cite.sty is already installed on most LaTeX systems. Be sure and use
% version 5.0 (2009-03-20) and later if using hyperref.sty.
% The latest version can be obtained at:
% http://www.ctan.org/pkg/cite
% The documentation is contained in the cite.sty file itself.

% *** GRAPHICS RELATED PACKAGES ***
%
\ifCLASSINFOpdf
  % \usepackage[pdftex]{graphicx}
  % declare the path(s) where your graphic files are
  % \graphicspath{{../pdf/}{../jpeg/}}
  % and their extensions so you won't have to specify these with
  % every instance of \includegraphics
  % \DeclareGraphicsExtensions{.pdf,.jpeg,.png}
\else
  % or other class option (dvipsone, dvipdf, if not using dvips). graphicx
  % will default to the driver specified in the system graphics.cfg if no
  % driver is specified.
  % \usepackage[dvips]{graphicx}
  % declare the path(s) where your graphic files are
  % \graphicspath{{../eps/}}
  % and their extensions so you won't have to specify these with
  % every instance of \includegraphics
  % \DeclareGraphicsExtensions{.eps}
\fi
\hyphenation{op-tical net-works semi-conduc-tor}

\newcommand{\mC}{\mathcal {C}}
\newcommand{\mN}{\mathcal {N}}

\newcommand{\tE}{{\tt E}}
\newcommand{\tP}{{\tt P}}

\begin{document}
%
% paper title
% Titles are generally capitalized except for words such as a, an, and, as,
% at, but, by, for, in, nor, of, on, or, the, to and up, which are usually
% not capitalized unless they are the first or last word of the title.
% Linebreaks \\ can be used within to get better formatting as desired.
% Do not put math or special symbols in the title.
\title{Analysis of Outage Probabilities for Cooperative NOMA Users with Imperfect CSI}

% author names and affiliations
% use a multiple column layout for up to three different
% affiliations
%\author{\IEEEauthorblockN{Xuesong~Liang}
%\IEEEauthorblockA{College of Information Science\\ and Electronic Engineering,\\ Zhejiang University, and\\School of Communication Engineering, \\Hangzhou Dianzi University,\\ Hangzhou, China\\
%Email:liangxs@hdu.edu.cn}
%\and
%\IEEEauthorblockN{Xinbao~Gong\\ and Yongpeng~Wu}
%\IEEEauthorblockA{Department of Electronic Engineering,\\ Shanghai Jiao Tong University,\\  Minhang, Shanghai, China \\
%Email: xbgong@sjtu.edu.cn\\yongpeng.wu@sjtu.edu.cn}
%\and
%\IEEEauthorblockN{James Kirk\\ and Montgomery Scott}
%\IEEEauthorblockA{Starfleet Academy\\
%San Francisco, California 96678--2391\\
%Telephone: (800) 555--1212\\
%Fax: (888) 555--1212}}

% conference papers do not typically use \thanks and this command
% is locked out in conference mode. If really needed, such as for
% the acknowledgment of grants, issue a \IEEEoverridecommandlockouts
% after \documentclass

% for over three affiliations, or if they all won't fit within the width
% of the page, use this alternative format:

%
\author{\IEEEauthorblockN{Xuesong~Liang\IEEEauthorrefmark{1}\IEEEauthorrefmark{2},
Xinbao~Gong\IEEEauthorrefmark{3},
Yongpeng~Wu\IEEEauthorrefmark{3},
Derrick~Wing~Kwan~Ng\IEEEauthorrefmark{4} and
Tao~Hong\IEEEauthorrefmark{5}}
\IEEEauthorblockA{\IEEEauthorrefmark{1}  College of Information Science and Electronic Engineering, Zhejiang University, Hangzhou, China }
\IEEEauthorblockA{\IEEEauthorrefmark{2} School of Communication Engineering, Hangzhou Dianzi University, Hangzhou, China\\
Email: liangxs@hdu.edu.cn}
\IEEEauthorblockA{\IEEEauthorrefmark{3}Department of Electronic Engineering, Shanghai Jiao Tong University,  Minhang, Shanghai, China\\
Email: \{xbgong, yongpeng.wu\}@sjtu.edu.cn}
\IEEEauthorblockA{\IEEEauthorrefmark{4} School of Electrical Engineering and Telecommunications,
the University of New South Wales, Australia \\Email:w.k.ng@unsw.edu.au}
\IEEEauthorblockA{\IEEEauthorrefmark{5} School of Telecommunication and Information Engineering, Nanjing University of Posts and Telecommunications,\\ Nanjing, China \\Email:hongt@njupt.edu.cn}
}

% use for special paper notices
%\IEEEspecialpapernotice{(Invited Paper)}

% make the title area
\maketitle

% As a general rule, do not put math, special symbols or citations
% in the abstract
\begin{abstract}
{N}on-orthogonal multiple access (NOMA) is a promising spectrally-efficient technology to meet the massive  data requirement of the next-generation wireless communication networks.
In this paper, we consider a cooperative non-orthogonal multiple access  (CNOMA) networks consisting of a base station %(BS)
 and two users, where the near user serves as a decode-and-forward  relay to help the far user,
 and investigate the outage probability of the CNOMA users  under two different types of channel estimation errors.
For both CNOMA users, we derive the closed-form expressions of the outage probability  and discuss the asymptotic characteristics for the outage probability in the high signal-to-noise ratio (SNR) regimes.
Our results show that for the case of constant variance of the channel estimation error, the outage probability of two users are limited by a performance bottleneck which related to the value of the error variance.
In contrast, there is no such performance bottleneck for the outage probability when the variance of the channel estimation error deceases with SNR,  and in this case the diversity gain is fully achieved by the far user.
\end{abstract}

\begin{IEEEkeywords}
Outage probability, cooperative non-orthogonal multiple access,  channel estimation
error, decode-and-forward.
\end{IEEEkeywords}

% For peer review papers, you can put extra information on the cover
% page as needed:
% \ifCLASSOPTIONpeerreview
% \begin{center} \bfseries EDICS Category: 3-BBND \end{center}
% \fi
%
% For peerreview papers, this IEEEtran command inserts a page break and
% creates the second title. It will be ignored for other modes.
\IEEEpeerreviewmaketitle

\section{Introduction}
%
%\IEEEPARstart

{N}on-orthogonal multiple access (NOMA) has been considered as an emerging technology which can  address the massive data requirement due to increasing demand of mobile Internet and the Internet-of-Things (IoT) for  the fifth-generation (5G) wireless communications\cite{MShirvanimoghaddam_Massive,Book_KeyTechnologies17}.
Different from the conventional orthogonal multiple access (OMA) schemes, NOMA can  accommodate {substantially} more users  via non-orthogonal
resource allocation to obtain a significant gain in spectral efficiency \cite{YSaito_Nonorthogonal,LDai_Nonorthogonal,SMRIslam_PowerDomain}. %,ZDing_ASurveyonNonOrthogonal}.\cite{ABenjebbour_Concept}.
As an essential technology of 5G to achieve higher spectral efficiency and massive connectivity, NOMA was extended to cooperative transmission to enhance the transmission reliability for the users with poor channel conditions \cite{ZDing_Cooperative,JBKim_Nonorthogonal}.
 Specifically, two types of cooperative NOMA (CNOMA) systems were introduced and classified by different cooperation schemes:
 the cooperation among the NOMA users  \cite{ZDing_Cooperative,YZhou_Dynamic18}, and the  CNOMA systems employing dedicated relays \cite{JBKim_Nonorthogonal,XLiang_Outage17}.
 It is shown that the above two types of CNOMA can utilize the resources of the network efficiently and improve the spectral efficiency of the system compared to cooperative OMA systems\cite{ZDing_ASurveyonNonOrthogonal}.
 %Due to the superiority of CNOMA systems, the system performance of CNOMA under different scenarios have been  further addressed in recent years.

As perfect channel state information (CSI) cannot be acquired in practice due to the limited overhead of pilot signals in time-division-duplex %(TDD)
systems and the finite capacity of feedback channel in frequency-division-duplex %(FDD)
systems, the impact of imperfect CSI on NOMA networks has drawn much attention in recent years.
Some early work addressed NOMA networks based on statistical characteristics of channels.  For instance, the outage performance of a NOMA system was discussed in \cite{ZDing_OnPerformance14} with the priori knowledge of the distribution characteristics of the users's location and their small scale fading.
In \cite{QSun_OnErgodic15}, the ergodic capacity maximization problem was studied for the multiple-input multiple-output (MIMO) NOMA systems with
statistical CSI acquired  by the transmitters.
Afterwards, the impact of channel estimation error was further discussed for practical systems in the following works \cite{ZYang_Performance16,WCai_User16,HVCheng_NOMA17,YGao_Analysis18}.
For instance, the authors in \cite{ZYang_Performance16} investigated the outage probability and the average rate for NOMA systems under channel estimation error, with comparison of the NOMA systems when only statistical CSI is known by the receivers.
In \cite{WCai_User16}, a tractable analysis on the outage probability was performed for a downlink NOMA system with imperfect CSI on account of estimation error and noise. Based on that, the  user selection and power allocation were optimized.
Moreover, the study of NOMA systems with imperfect CSI was extended to multi-users MIMO-NOMA systems in \cite{HVCheng_NOMA17},  in which the rate gain of system was indicated by applying NOMA scheme in MIMO systems.
Besides, the authors in \cite{YGao_Analysis18} proposed a dynamic-ordered self-interference cancellation (SIC) receiver  for NOMA systems with the assumption of  imperfect CSI, then the advantage of the proposed receiver was shown with comparison of traditional SIC receiver under the same situation of imperfect CSI.

Despite the aforementioned progress on the study of NOMA systems under imperfect CSI, the impact of imperfect CSI on CNOMA networks is rarely addressed.
To the best of the authors' knowledge, most of the previous literatures %\cite{ZDing_Cooperative,}
on  CNOMA networks assume that perfect CSI can be obtained at the receivers \cite{ZDing_Cooperative,JBKim_Nonorthogonal,YZhou_Dynamic18,XLiang_Outage17}, %XYue_Outage17,NTDo_ABNBF17}
which is infeasible in practical systems. %This motivates the study of this paper.
In this paper, we investigate the outage performance of a downlink CNOMA system under imperfect CSI, in which two different types of CSI estimation errors, constant and variable variance of CSI estimation errors are assumed. Our contribution includes two parts:\\
  1) For the considered two-user CNOMA system under imperfect CSI, we firstly derive the exact closed-form expression of the outage probability for the near user.
  However,  the closed-form expression of outage probability of the far user is too complicated and mathematical intractable. Therefore we propose an analytical approximation for that, %by using the statistic characteristic of channel estimation errors. The simulation results show an excellent agreement between the approximated expression and
  which is validated to be sufficiently close to the exact outage probability of the far user;\\
 2) With the above analytical results for outage probability of both users, we analyze the asymptotic behaviors of the outage probability in the high SNR regimes, and compared them between two types of channel estimation errors with constant and variable variance.
We found that %the outage probability of both users will further deteriorate under the constant CSI estimation error variance, in contrast to that  when the CSI estimation error variance decreases with received SNRs.
 a performance floor exist for the outage probability of both users in the high SNR regimes when the error variance keeps constant.
 On the  contrary, the outage probability of both users will decrease with the SNR when the errors variance is reduced by the increase of SNR. Moreover, we prove that the diversity gain of the far user can be fully achieved in the case of the variable error variance.
Monte-Carlo simulations are provided to validate our  analytical results.

\emph{Notations}---
In this paper,  we denote the probability and the expectation  value of a random event $A$ by ${\tt P}\{A\}$ and ${\tt E}\{A\}$, respectively.
%$^*$ denotes the conjugate transpose operation and
 $| \cdot |$ denotes the absolute value of a complex-valued scalar.

\begin{figure}[!t]
\setlength{\abovecaptionskip}{0.cm}
\setlength{\belowcaptionskip}{-0.cm}
\centering
\includegraphics[scale=0.70]{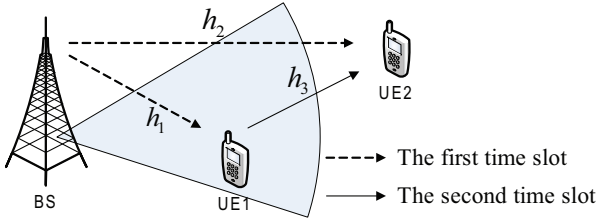}
\caption{A CNOMA system where the near user (UE1) assists the far user (UE2).}
\label{fig:scenario}
\end{figure}

\section{System Model}\label{System_Model}
Consider a model of a downlink CNOMA system as shown  in Fig.~\ref{fig:scenario},
including one BS and  two users (UE1 and UE2), in which UE1 and UE2 directly communicate with  the BS.
 However, UE2 is much far away from the BS than UE1 so that UE2 needs the help from UE1. Each node is equipped with a single-antenna and UE1 serves as a relay (for UE2) operating in HD mode. For this system, we model all the channels as independent Rayleigh fading, and represent the channels between the BS and the two users and the channel between  the two users as  $h_1  \sim \mC\mN\left( {0,\sigma _1^2 } \right)$, $h_2  \sim \mC\mN\left( {0,\sigma _2^2 } \right)$, and $h_3 \sim \mC\mN\left( {0,\sigma _3^2 } \right)$, respectively.
 Besides, we assume that the CSI are estimated imperfectly at the receivers, thus we have \cite{ZYang_Performance16}
\begin{equation}\label{imperfect_CSI}
h_i  = \hat h_i  + e_i,
\end{equation}
where $\hat h_i$ denotes the estimated channel coefficient for $h_i$ with $\hat h_i  \sim \mC\mN\left( {0,\hat \sigma _i^2 } \right)$ and $e_i$ denotes the channel estimation error with $e_i  \sim \mC\mN\left( {0,\sigma _{e_i }^2 } \right)$,  for $i=1,2,3$. In this paper,  two types of the channel estimation error are  considered. The first one is that $\sigma _{e_i }^2 $  decreases inversely proportional to the received SNR by a scaling factor $\eta$ \cite{SSIkki_TwoWay12}, and the second is that $\sigma _{e_i }^2 $  keeps constant \cite{ZYang_Performance16}.\\
  The transmission scheme of CNOMA consists of two consecutive time slots with equal length, as described in the follows.
During the first time slot, the BS broadcasts a superimposed signal, $x_1   = \sqrt {P_1 } s_1  + \sqrt {P_2 } s_2$,
%\begin{equation}\label{x_BS}
% x_1   = \sqrt {P_1 } s_1  + \sqrt {P_2 } s_2,
%\end{equation}
to both users, where $s_1$ and $s_2$ are the desired signals for UE1 and UE2, respectively, with
$\tE\left\{ {\left| {s_1 } \right|^2 } \right\} = \tE\left\{ {\left| {s_2 } \right|^2 } \right\} = 1$.
We denote the  transmit power for UE1 and UE2 as $P_1$ and $P_2$, respectively, and denote the total transmit power for the BS as $ P_{\rm T} = P_1  + P_2 $.
 Therefore the received signals  at the  two users  are expressed by
\begin{align}\label{yi_slot1}
y_i  = \left( {\hat h_i  + e_i } \right)x^{BS}  + n_i,\quad\mbox{for~} i = 1,2
\end{align}
where $n_i $  denotes the complex additive white Gaussian noise (AWGN)  at the UE$i$, i.e., $n_i  \sim \mC\mN\left( {0,N_0 } \right)$.
We assume $\sigma _1^2>\sigma _2^2$ without loss of generality and set $P_1 < P_2$ according to the NOMA protocol described in \cite{ABenjebbour_Concept}.\\%\footnote{}
During the second time slot,
 UE1 serves as a DF relay by transmitting $x_2   = \sqrt {P_3 } s_2$ to UE2 after $s_2$ was successfully detected.
This means that there are two cases of the received signal at UE2 as follows.\\
Case 1: When $s_2$ was not detected by UE1 successfully, then $x_2$ cannot be sent by UE1 and the received signal at UE2 is $y_2$ as represented by (\ref{yi_slot1}).\\
Case 2: When $s_2$ was  detected by UE1 successfully, then UE1 transmit $x_2$ to UE2, and the received signal at UE2 is expressed as
\begin{equation}\label{y_3DF}
y_3^{\rm UE2}  =\left( {\hat h_3  + e_3 } \right)x_2   + n_3,
\end{equation}
where $n_3 $ denotes the AWGN  at UE2, i.e., $n_3  \sim \mC\mN\left( {0,N_0 } \right)$. %, and $h_3$ denotes the channel between UE1 and UE2, i.e., $h_3  \sim CN\left( {0,\sigma _3^2 } \right)$.
Then, by using the maximum ratio combining (MRC), the  received signals from both time slots at UE2 are combined with the MRC coefficients, $\omega _1$ and $\omega _2$,  which yields
\begin{equation}\label{y_3cDF}
y_c^{\rm UE2}  = \omega _1   y_2  + \omega _2  y_{3}^{\rm UE2}.
\end{equation}

Based on the considered model, we can express the received SINRs at UE1  as
\begin{align}
&\gamma _{21} = \frac{{\left| {\hat h_1 } \right|^2 P_2 }}{{\left| {\hat h_1 } \right|^2 P_1  + \left| {e_1 } \right|^2 \left( {P_1  + P_2 } \right) + N_0 }}, \label{Gm_12}\\
&\gamma _1 = \frac{{\left| {\hat h_1 } \right|^2 P_1 }}{{\left| {e_1 } \right|^2 \left( {P_1  + P_2 } \right) + N_0 }} \label{Gm_11},
\end{align}
where $\gamma _{21} $ and $\gamma _1 $, respectively, denote  the SINRs for detecting  $s_2$ and $s_1$ in the SIC process at UE1.
Then, according to the two cases for the received signals at UE2, the SINR at UE2 (for decoding  $s_2$) are shown as follows.\\
%\begin{itemize}
%  \item
Case 1: %When $s_2$ was not detected by UE1 successfully,
The received signal at UE2 is $y_2$ in (\ref{yi_slot1})
 and the SINR at UE2 is represented by
 \begin{equation}\label{Gm_DF1}
\gamma _2^{\left( 1 \right)}  = \frac{{\left| {\hat h_2 } \right|^2 P_2 }}{{\left| {\hat h_2 } \right|^2 P_1  + \left| {e_2 } \right|^2 \left( {P_1  + P_2 } \right) + N_0 }}.
 \end{equation}
%  \item
\\Case 2: % When $s_2$ was  detected by UE1 successfully,  by using MRC, %UE2 obtained
The received signal at UE2 is $y_c^{\rm UE2}$ in (\ref{y_3cDF}), and by using the  property of MRC,
the SINR at UE2  is given by
\begin{small}
 \begin{align}\label{Gm_DF2}
&\gamma _2^{\left( 2 \right)}
=\frac{{\left| {\hat h_2 } \right|^2 P_2 }}{{\left| {\hat h_2 } \right|^2 P_1  + \left| {e_2 } \right|^2 \left( {P_1  + P_2 } \right) + N_0 }} + \frac{{\left| {\hat h_3 } \right|^2 P_3 }}{{\left| {e_3 } \right|^2 P_3  + N_0 }}.
 \end{align}
\end{small}
%\end{itemize}

\section{Outage Probabilities under imperfect CSI}\label{Performance_Analysis}
In this section, we investigate the outage probability, which is an important performance metric of the considered CNOMA system. The outage probability of the CNOMA users are both derived in closed-form expressions.
 Furthermore, the asymptotic characteristics of outage performance of the CNOMA users are investigated for the high SNR regimes and the impact of different CSI estimation errors on the outage  performance of the CNOMA  users are discussed.
\subsection{Outage Probabilities for CNOMA Users}\label{Outage_Probability}
To begin with, we characterize the outage probability achieved by this two-phase CNOMA system. Denoting the SINR thresholds of data rate requirements of UE1 and UE2 as $\bar \gamma_1$ and $\bar \gamma_2$, respectively, the definition of the outage probability of UE1 is expressed as
\begin{equation}\label{Def:PO_UE1}
\tP_{\rm out}^{\rm UE1}  = \tP\left\{ {\gamma _{21} < \bar \gamma _2 {\mbox{ or }}\gamma _1 < \bar \gamma _1 } \right\},
\end{equation}
and that of UE2 is expressed as $\tP_{\rm out}^{\rm UE2}  = \tP\left\{ {\gamma _2^{\rm UE2}  < \bar \gamma _2 } \right\}$.\\
We denote $\lambda_{\rm P}= {\frac{{P_2 }}{{P_1 }}} $ and note that $\lambda_{\rm P} > \bar \gamma _2 $ is a sufficient  condition for this considered CNOMA system working normally (It can be easily proved that $\tP_{\rm out}^{\rm UE1}=\tP_{\rm out}^{\rm UE2}=1$ when $\lambda_{\rm P} \le \bar \gamma _2 $). Therefore we assume $\lambda _{\rm P}> \bar \gamma _2 $ for the analysis of the outage probability in the rest of this paper.
%\begin{equation}\label{Def:PO_UE2}
%\tP_{\rm out}^{\rm UE2}  = \tP\left\{ {\gamma _2  < \bar \gamma _2 } \right\}.
%\end{equation}
Then, the  outage probability of UE1 is shown in Theorem~\ref{Thm:PO_UE1}.
\begin{theorem}\label{Thm:PO_UE1}
  The  outage probability of UE1  is given by
  \begin{equation}\label{Eq0:PO_UE1}
\tP_{\rm out}^{\rm UE1} =  {1 - \left( {1 + \chi _M\left( {1 + \lambda _{\rm P} } \right) \frac{{\sigma _{e_1 }^2 }}{{\hat \sigma _1^2 }}} \right)^{ - 1} \exp \left( { - \frac{{\chi _M }}{{\rho _{11} }}} \right)},
\end{equation}
where $\rho _{11}  = \frac{{P_1 \hat \sigma _1^2 }}{{N_0 }}$, $\chi  = \frac{{\bar \gamma _2 }}{{\lambda _{\rm P} - \bar \gamma _2 }}$, $\lambda _{\rm P} = \frac{{P_2 }}{{P_1 }}$ and $\chi _M  = \max \left\{ {\chi ,\bar \gamma _1 } \right\}$.
\end{theorem}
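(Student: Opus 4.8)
The plan is to compute the complementary (no-outage) probability $1 - \tP_{\rm out}^{\rm UE1} = \tP\left\{ \gamma_{21} \ge \bar\gamma_2 \text{ and } \gamma_1 \ge \bar\gamma_1 \right\}$ and then subtract from one. The observation that makes this tractable is that both SINR constraints, after clearing denominators, reduce to lower bounds on $\left| \hat h_1 \right|^2$ of exactly the same form, differing only by a multiplicative threshold constant. First I would introduce $X = \left| \hat h_1 \right|^2$ and $Y = \left| e_1 \right|^2$; since $\hat h_1 \sim \mC\mN(0,\hat\sigma_1^2)$ and $e_1 \sim \mC\mN(0,\sigma_{e_1}^2)$ are independent, $X$ and $Y$ are independent exponential random variables with means $\hat\sigma_1^2$ and $\sigma_{e_1}^2$, respectively, so that $\tP\{X \ge x\} = \exp(-x/\hat\sigma_1^2)$ for $x \ge 0$.

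Next I would transform each SINR condition into a linear inequality. Using the standing assumption $\lambda_{\rm P} > \bar\gamma_2$, clearing the denominator in $\gamma_{21} \ge \bar\gamma_2$ from (\ref{Gm_12}) and dividing through by $P_1(\lambda_{\rm P} - \bar\gamma_2)$ yields $X \ge \chi\left[ (1+\lambda_{\rm P})Y + N_0/P_1 \right]$ with $\chi = \bar\gamma_2/(\lambda_{\rm P} - \bar\gamma_2)$. Similarly, clearing the denominator in $\gamma_1 \ge \bar\gamma_1$ from (\ref{Gm_11}) and dividing by $P_1$ gives $X \ge \bar\gamma_1\left[ (1+\lambda_{\rm P})Y + N_0/P_1 \right]$. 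Because both inequalities impose a lower bound on $X$ with the \emph{identical} bracketed factor, their conjunction collapses to the single inequality $X \ge \chi_M\left[ (1+\lambda_{\rm P})Y + N_0/P_1 \right]$ with $\chi_M = \max\{\chi, \bar\gamma_1\}$; this is precisely where the maximum appearing in the statement originates.

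With the no-outage event written in this compact form, I would evaluate the probability by conditioning on $Y$ and integrating against its exponential density. For a fixed $Y = y \ge 0$ the threshold is nonnegative, so the survival function of $X$ applies directly, giving
\begin{equation}\label{Eq:proof_integral}
1 - \tP_{\rm out}^{\rm UE1} = \int_0^\infty \exp\!\left( -\frac{\chi_M\left[ (1+\lambda_{\rm P})y + N_0/P_1 \right]}{\hat\sigma_1^2} \right) \frac{1}{\sigma_{e_1}^2} \exp\!\left( -\frac{y}{\sigma_{e_1}^2} \right) dy.
\end{equation}
Factoring out the $y$-independent term produces the factor $\exp(-\chi_M/\rho_{11})$ once one recognizes $N_0/(P_1\hat\sigma_1^2) = 1/\rho_{11}$, while the remaining integral is a standard exponential integral evaluating to $\left( 1 + \chi_M(1+\lambda_{\rm P})\sigma_{e_1}^2/\hat\sigma_1^2 \right)^{-1}$. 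Subtracting the resulting product from one yields (\ref{Eq0:PO_UE1}).

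The computation is essentially routine, so I do not expect a genuine obstacle; the single step requiring care is verifying that the two SINR conditions reduce to the same bracketed expression, so that their intersection is governed by $\chi_M$ alone rather than requiring a two-dimensional region analysis. Once that algebraic reduction is confirmed, the closed form follows immediately from the elementary integral in (\ref{Eq:proof_integral}).
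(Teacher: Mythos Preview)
Your proposal is correct and follows essentially the same approach as the paper: reduce both SINR conditions to linear lower bounds on $|\hat h_1|^2$ sharing the common factor $(1+\lambda_{\rm P})|e_1|^2 + N_0/P_1$, so that their intersection is governed by $\chi_M=\max\{\chi,\bar\gamma_1\}$ (equivalently the paper's $\beta_1=\chi_M/P_1$), and then integrate against the independent exponential laws of $|\hat h_1|^2$ and $|e_1|^2$. The only cosmetic difference is that you compute the complementary probability while the paper works with the outage event directly; your write-up in fact supplies the integral evaluation that the paper omits as ``tedious details.''
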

\begin{proof}
See Appendix~\ref{Appendix:PO_UE1}.
\end{proof}

Besides,  the overall  outage probability of UE2 is decided by the actual results of the outage probability of UE2 in the two cases, which are shown  in the following lemma.
\begin{lemma}\label{Thm:PO_UE2_DF}

The  outage probability of UE2 in Case~1 %when $s_2$ is not  detected by UE1 successfully
is given by
\begin{align}\label{Eqn1:Pout_UE2_1}
&\tP_{\rm out1}^{\rm UE2}= 1 - \left( {1 + \chi \left( {1 + \lambda _{\rm P} } \right)\frac{{\sigma _{e_2 }^2 }}{{\hat \sigma _2^2 }}} \right)^{ - 1} \exp \left( { - \frac{\chi }{{\rho _{12} }}} \right)
\end{align}
with $\rho _{12}  = \frac{{P_1 \hat \sigma _2^2 }}{{N_0 }}$,
and the  outage probability of UE2 in Case~2
%when $s_2$ was  detected by UE1 successfully
is given by (\ref{Eqn1:Pout_UE2_2}) (at top of next page).

\begin{small}
\begin{figure*}
\setlength{\abovecaptionskip}{0.cm}
\setlength{\belowcaptionskip}{-0.cm}
\begin{align}\label{Eqn1:Pout_UE2_2}
&\tP_{\rm out2}^{\rm UE2}= 1 - \left( {1 + \frac{{\sigma _{e_3 }^2 }}{{\hat \sigma _3^2 }}\bar \gamma _2 } \right)^{ - 1} \exp \left( { - \frac{{\bar \gamma _2 }}{{\rho _3 }}} \right) \nonumber \\
 &-\int_0^{\bar \gamma _2 } {\left[ {\frac{1}{{\rho _3 }}\left( {1 + \frac{{\sigma _{e_3 }^2 }}{{\hat \sigma _3^2 }}y} \right)^{ - 1}  + \frac{{\sigma _{e_3 }^2 }}{{\hat \sigma _3^2 }}\left( {1 + \frac{{\sigma _{e_3 }^2 }}{{\hat \sigma _3^2 }}y} \right)^{ - 2} } \right]\left( {1 + \frac{{\left( {\bar \gamma _2  - y} \right)\left( {1 + \lambda _{\rm P} } \right)}}{{\lambda _{\rm P} - \bar \gamma _2  + y}}\frac{{\sigma _{e_2 }^2 }}{{\hat \sigma _2^2 }}} \right)^{ - 1} } \left[ {\exp \left( { - \frac{{\left( {\bar \gamma _2  - y} \right)}}{{\lambda _{\rm P} - \bar \gamma _2  + y}}\frac{1}{{\rho _{12} }} - \frac{y}{{\rho _3 }}} \right)} \right]dy
   \end{align}
 \hrulefill
\end{figure*}
\end{small}
\end{lemma}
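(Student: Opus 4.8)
The plan is to treat the two cases separately, exploiting that all the magnitude-squared quantities $|\hat h_i|^2$ and $|e_i|^2$ are independent exponential random variables: since each $\hat h_i$ and $e_i$ is zero-mean circularly-symmetric complex Gaussian, $|\hat h_i|^2$ is exponential with mean $\hat\sigma_i^2$ and $|e_i|^2$ is exponential with mean $\sigma_{e_i}^2$, and independence across $i$ follows from the independent Rayleigh fading assumed in Section~\ref{System_Model}. For Case~1, I would start from $\tP_{\rm out1}^{\rm UE2}=\tP\{\gamma_2^{(1)}<\bar\gamma_2\}$ with $\gamma_2^{(1)}$ from (\ref{Gm_DF1}). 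Clearing the denominator and using $\lambda_{\rm P}>\bar\gamma_2$ (so the coefficient of $|\hat h_2|^2$ stays positive), the outage event rearranges into the linear threshold $|\hat h_2|^2<\chi[(1+\lambda_{\rm P})|e_2|^2+N_0/P_1]$. Conditioning on $|e_2|^2=y$ and applying the exponential CDF of $|\hat h_2|^2$ gives conditional outage $1-\exp(-\chi[(1+\lambda_{\rm P})y+N_0/P_1]/\hat\sigma_2^2)$; averaging over the exponential density of $|e_2|^2$ is an elementary integral that produces the rational prefactor $(1+\chi(1+\lambda_{\rm P})\sigma_{e_2}^2/\hat\sigma_2^2)^{-1}$ and the factor $\exp(-\chi/\rho_{12})$, recovering (\ref{Eqn1:Pout_UE2_1}).

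For Case~2 the key observation is that $\gamma_2^{(2)}$ in (\ref{Gm_DF2}) is a sum of two statistically independent terms: the direct-link term $A$ (identical in form to $\gamma_2^{(1)}$, depending only on $|\hat h_2|^2,|e_2|^2$) and the relay-link term $B=|\hat h_3|^2 P_3/(|e_3|^2 P_3+N_0)$ (depending only on $|\hat h_3|^2,|e_3|^2$). Since $B\ge 0$, I would write the outage probability through the convolution identity $\tP\{A+B<\bar\gamma_2\}=\int_0^{\bar\gamma_2}\tP\{A<\bar\gamma_2-y\}\,f_B(y)\,dy$, where the upper limit is $\bar\gamma_2$ because $\tP\{A<\bar\gamma_2-y\}=0$ once $y>\bar\gamma_2$. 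The conditional factor $\tP\{A<\bar\gamma_2-y\}$ is exactly the Case~1 computation rerun with the reduced threshold $\bar\gamma_2-y$; substituting this threshold sends $\chi\mapsto(\bar\gamma_2-y)/(\lambda_{\rm P}-\bar\gamma_2+y)$, which matches the bracketed rational-and-exponential factor in (\ref{Eqn1:Pout_UE2_2}).

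It then remains to obtain the density $f_B$. I would first compute the CDF of $B$ by the same condition-and-integrate step: rearranging $B<t$ into $|\hat h_3|^2<t(|e_3|^2+N_0/P_3)$, conditioning on $|e_3|^2$, and averaging gives $F_B(t)=1-(1+\sigma_{e_3}^2 t/\hat\sigma_3^2)^{-1}\exp(-t/\rho_3)$. Differentiating by the product rule yields $f_B(y)=[\,\rho_3^{-1}(1+\sigma_{e_3}^2 y/\hat\sigma_3^2)^{-1}+(\sigma_{e_3}^2/\hat\sigma_3^2)(1+\sigma_{e_3}^2 y/\hat\sigma_3^2)^{-2}\,]\exp(-y/\rho_3)$, precisely the leading bracket of the integrand in (\ref{Eqn1:Pout_UE2_2}). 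Splitting the convolution integral into its ``$1$'' part and its remainder, the first part integrates $f_B$ over $[0,\bar\gamma_2]$ to give $F_B(\bar\gamma_2)=1-(1+\sigma_{e_3}^2\bar\gamma_2/\hat\sigma_3^2)^{-1}\exp(-\bar\gamma_2/\rho_3)$, which supplies the two closed-form terms of (\ref{Eqn1:Pout_UE2_2}), while the remainder is the stated integral.

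I expect the main obstacle to be bookkeeping rather than anything conceptual: keeping the substitution $\bar\gamma_2\mapsto\bar\gamma_2-y$ consistent across every $\chi$-dependent factor, and assembling $f_B$ so its two pieces align with (\ref{Eqn1:Pout_UE2_2}). The remaining $y$-integral admits no elementary closed form — consistent with the paper's remark that the far-user expression is intractable — so no attempt to evaluate it is needed, and the lemma is proved as the exact integral representation stated.
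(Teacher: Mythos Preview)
Your proposal is correct and follows essentially the same route as the paper's own proof: the paper likewise rearranges Case~1 into the linear threshold $|\hat h_2|^2<\frac{\chi}{P_1}\bigl(|e_2|^2(P_1+P_2)+N_0\bigr)$ and reuses the Theorem~\ref{Thm:PO_UE1} computation, and for Case~2 it writes $\tP\{X+Y<\bar\gamma_2\}=\int_0^{\bar\gamma_2} f_Y(y)F_X(\bar\gamma_2-y)\,dy$ with exactly your $X=A$, $Y=B$, the same $F_X$ obtained from the Case~1 formula at threshold $\bar\gamma_2-y$, and the same $f_Y$ you derived. Your extra step of splitting off the ``$1$'' part of $F_X$ to produce the boundary term $F_B(\bar\gamma_2)$ is just a slightly more explicit bookkeeping of the same substitution.
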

\begin{proof}
See Appendix~\ref{Appendix:PO_UE2_DF}.
\end{proof}
However, the results in (\ref{Eqn1:Pout_UE2_2}) is too complicated and is difficult to be analyzed. Therefore by using  the mean value of $\left| {e_2 } \right|^2$, ${\hat \sigma _2^2 }$, to replace with $\left| {e_2 } \right|^2$ in (\ref{Eqn1:Pout_UE2_2}), we obtain an approximation of (\ref{Eqn1:Pout_UE2_2}) as shown in the following lemma.
\begin{lemma}\label{Thm:PO_UE2_Apprx}
An approximation of the  outage probability of UE2 in Case~2 %when $s_2$ was detected by UE1 successfully
is given  by
\begin{small}
 \begin{align}\label{Eqn:PO_UE2_Apprx}
\tilde \tP_{\rm out2}^{\rm UE2} %= \tP\left\{ {\frac{{\left| {\hat h_2 } \right|^2 P_2 }}{{\left| {\hat h_2 } \right|^2 P_1  + \sigma _{e_2 }^2 \left( {P_1  + P_2 } \right) + N_0 }} + \frac{{\left| {\hat h_3 } \right|^2 P_3 }}{{\sigma _{e_3 }^2 P_3  + N_0 }} < \bar \gamma _2 } \right\} \\
  = 1 - \exp \left( { - \frac{{I_{\tilde X} \chi }}{{P_1 \hat \sigma _2^2 }}} \right) - \frac{{I_{\tilde X} }}{{P_1 \hat \sigma _2^2 }}\exp \left( {\frac{{I_{\tilde Y} \left( {\lambda _{\rm P} - \bar \gamma _2 } \right)}}{{P_3 \hat \sigma _3^2 }}} \right)\Theta \left( \chi  \right),
\end{align}
\end{small}
where $ \Theta \left( \chi  \right) = \int_0^\chi  {\exp \left( { - \frac{{I_{\tilde X} }}{{P_1 \hat \sigma _2^2 }}u - \frac{{I_{\tilde Y} \lambda _{\rm P} }}{{P_3 \hat \sigma _3^2 }}\frac{1}{{1 + u}}} \right)du}$ with $\chi  = \frac{{\bar \gamma _2 }}{{\lambda _{\rm P} - \bar \gamma _2 }}$, $\lambda _{\rm P} = \frac{{P_2 }}{{P_1 }} $,
 $I_{\tilde X}  = \sigma _{e_2 }^2 \left( {P_1  + P_2 } \right) + N_0 $, $I_{\tilde Y}  = \sigma _{e_3 }^2 P_3  + N_0$.
\end{lemma}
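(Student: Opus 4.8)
The plan is to re-derive the Case~2 outage probability after replacing the random error powers $|e_2|^2$ and $|e_3|^2$ in the combined SINR (\ref{Gm_DF2}) by their means $\sigma_{e_2}^2$ and $\sigma_{e_3}^2$; this is precisely the approximation encoded in the definitions of $I_{\tilde X}$ and $I_{\tilde Y}$. After the replacement the two MRC branches decouple, so I would write $\tilde X = |\hat h_2|^2 P_2/(|\hat h_2|^2 P_1 + I_{\tilde X})$ for the direct-link term and $\tilde Y = |\hat h_3|^2 P_3/I_{\tilde Y}$ for the relay term, giving $\tilde\tP_{\rm out2}^{\rm UE2} = \tP\{\tilde X + \tilde Y < \bar\gamma_2\}$. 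Since $|\hat h_2|^2$ and $|\hat h_3|^2$ are independent exponential variables with means $\hat\sigma_2^2$ and $\hat\sigma_3^2$, the terms $\tilde X$ and $\tilde Y$ are independent, which is what renders the single integral tractable.

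First I would record the two marginal laws. The relay term is a scaled exponential, giving the elementary CDF $\tP\{\tilde Y < y\} = 1 - \exp(-y I_{\tilde Y}/(P_3\hat\sigma_3^2))$. The direct term is a strictly increasing map of $|\hat h_2|^2$ with range $[0,\lambda_{\rm P})$; inverting $\tilde X = x$ to $|\hat h_2|^2 = x I_{\tilde X}/(P_1(\lambda_{\rm P}-x))$ yields the CDF
\[
F_{\tilde X}(x) = 1 - \exp\!\left(-\frac{I_{\tilde X}}{P_1\hat\sigma_2^2}\,\frac{x}{\lambda_{\rm P}-x}\right),\qquad 0\le x<\lambda_{\rm P},
\]
with density $f_{\tilde X}$ obtained by differentiation. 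Because $\lambda_{\rm P}>\bar\gamma_2$ and $\tilde Y\ge 0$, the outage event forces $\tilde X\in[0,\bar\gamma_2)$, so conditioning on $\tilde X=x$ and inserting the CDF of $\tilde Y$ gives
\[
\tilde\tP_{\rm out2}^{\rm UE2} = \int_0^{\bar\gamma_2}\!\left[1-\exp\!\left(-\frac{(\bar\gamma_2-x)I_{\tilde Y}}{P_3\hat\sigma_3^2}\right)\right]f_{\tilde X}(x)\,dx .
\]

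I would then split this into two pieces. The constant part integrates to $F_{\tilde X}(\bar\gamma_2) = 1-\exp(-I_{\tilde X}\chi/(P_1\hat\sigma_2^2))$ upon recognizing $\chi = \bar\gamma_2/(\lambda_{\rm P}-\bar\gamma_2)$, supplying the first two terms of (\ref{Eqn:PO_UE2_Apprx}). The crux is the remaining integral, for which I would apply the substitution $u = x/(\lambda_{\rm P}-x)$, mapping $x:0\to\bar\gamma_2$ to $u:0\to\chi$. Under it $f_{\tilde X}(x)\,dx$ collapses to the linear form $(I_{\tilde X}/(P_1\hat\sigma_2^2))\exp(-(I_{\tilde X}/(P_1\hat\sigma_2^2))u)\,du$, while $\bar\gamma_2-x = (\bar\gamma_2-\lambda_{\rm P}) + \lambda_{\rm P}/(1+u)$ factors the relay exponential into the prefactor $\exp(I_{\tilde Y}(\lambda_{\rm P}-\bar\gamma_2)/(P_3\hat\sigma_3^2))$ times the $1/(1+u)$ term that defines $\Theta(\chi)$. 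Collecting the two pieces reproduces (\ref{Eqn:PO_UE2_Apprx}).

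The main obstacle is identifying this substitution: $u=x/(\lambda_{\rm P}-x)$ is the one change of variable that simultaneously linearizes the exponent of $f_{\tilde X}$ and converts the affine argument $\bar\gamma_2-x$ of the relay exponential into the reciprocal form $1/(1+u)$, so that the leftover integral is exactly $\Theta(\chi)$. Since $\Theta(\chi)$ admits no elementary antiderivative, the derivation stops there and retains it as a defined one-dimensional integral rather than forcing a fully closed form.
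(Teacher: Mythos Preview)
Your proposal is correct and follows essentially the same route as the paper: replace $|e_2|^2,|e_3|^2$ by their means, define the same approximate branches $\tilde X,\tilde Y$, and reduce $\tP\{\tilde X+\tilde Y<\bar\gamma_2\}$ to a single integral. The only cosmetic difference is that the paper writes the convolution as $\int_0^{\bar\gamma_2} f_{\tilde Y}(y)\,F_{\tilde X}(\bar\gamma_2-y)\,dy$ (conditioning on $\tilde Y$) whereas you condition on $\tilde X$; after the substitution $u=x/(\lambda_{\rm P}-x)$ (equivalently $u=(\bar\gamma_2-y)/(\lambda_{\rm P}-\bar\gamma_2+y)$ in the paper's parametrization) both collapse to the same $\Theta(\chi)$, and your write-up actually makes that step more explicit than the appendix does.
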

\begin{proof}
See Appendix~\ref{Appendix:PO_UE2_Apprx}.
\end{proof}

With the conclusions of Lemmas~\ref{Thm:PO_UE2_DF}--\ref{Thm:PO_UE2_Apprx}, we finally obtain the overall  outage probability of UE2 as follows.
\begin{theorem}\label{Thm:PO_UE2_Ovr}
The overall  outage probability of UE2 is approximated by (\ref{Eqn0:PO_UE2_Ovr}) (at top of next page)
\begin{figure*}
\begin{equation}\label{Eqn0:PO_UE2_Ovr}
 \tilde \tP_{\rm ovr}^{\rm UE2}  = 1 - \mu _2 \exp \left( { - \frac{\chi }{{\rho _{12} }}} \right) + \mu _1 \mu _2 \exp \left( { - \frac{\chi }{{\rho _{11} }} - \frac{\chi }{{\rho _{12} }}} \right) - \mu _1 \exp \left( { - \frac{\chi }{{\rho _{11} }} - \frac{\chi }{{\rho _{\tilde X} }}} \right) - \frac{{\mu _1 }}{{\rho _{\tilde X} }}\exp \left( {\frac{{\lambda _{\rm P} - \bar \gamma _2 }}{{\rho _{\tilde Y} }} - \frac{\chi }{{\rho _{11} }}} \right)\Theta \left( \chi  \right)
\end{equation}
 \hrulefill
\end{figure*}
with $\rho _{\tilde X}  = \frac{{P_1 \hat \sigma _2^2 }}{{I_{\tilde X} }}$, $\rho _{\tilde Y}  = \frac{{P_3 \hat \sigma _3^2 }}{{I_{\tilde Y} }}$, $\mu _i  = \left( {1 + \chi \left( {1 + \lambda _{\rm P} } \right)\frac{{\sigma _{e_i }^2 }}{{\hat \sigma _i^2 }}} \right)^{ - 1} $ and $\rho _{1i}  = \frac{{P_1 \hat \sigma _i^2 }}{{N_0 }}$, $i=1,2$.
\end{theorem}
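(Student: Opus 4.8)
The plan is to assemble the overall outage probability of UE2 from the two cases already analyzed, by conditioning on the event that decides which case occurs. Under the DF protocol UE1 relays $s_2$ (Case~2) exactly when it can decode $s_2$ in its SIC step, i.e., when $\gamma_{21}\ge\bar\gamma_2$; otherwise UE2 is in Case~1. Writing $p_{\rm r}=\tP\{\gamma_{21}\ge\bar\gamma_2\}$ for this relaying probability, the law of total probability gives $\tilde\tP_{\rm ovr}^{\rm UE2}=(1-p_{\rm r})\,\tP_{\rm out1}^{\rm UE2}+p_{\rm r}\,\tilde\tP_{\rm out2}^{\rm UE2}$, where $\tP_{\rm out1}^{\rm UE2}$ is the exact Case~1 outage from Lemma~\ref{Thm:PO_UE2_DF} and $\tilde\tP_{\rm out2}^{\rm UE2}$ is the approximate Case~2 outage from Lemma~\ref{Thm:PO_UE2_Apprx} (this approximation is the only source of inexactness, which is why the result carries a tilde).

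The factorization in this decomposition relies on an independence argument that I would spell out first: $p_{\rm r}$ depends only on $(\hat h_1,e_1)$, the Case~1 outage only on $(\hat h_2,e_2)$, and the Case~2 outage only on $(\hat h_2,e_2,\hat h_3,e_3)$. Since $(\hat h_1,e_1)$ is independent of the remaining channel/error variables, the conditional outage probability in each case coincides with the corresponding unconditional expression already derived, so no new integrals are needed beyond evaluating $p_{\rm r}$.

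To compute $p_{\rm r}$ I would rearrange $\gamma_{21}\ge\bar\gamma_2$ (using the standing assumption $\lambda_{\rm P}>\bar\gamma_2$) into $|\hat h_1|^2\ge\chi\,\frac{|e_1|^2(P_1+P_2)+N_0}{P_1}$, then average the exponential tail of $|\hat h_1|^2$ over the exponential density of $|e_1|^2$. This is the same conditioning-then-one-dimensional-integral computation used for the $\gamma_{21}<\bar\gamma_2$ event in the proof of Theorem~\ref{Thm:PO_UE1}, and it yields $p_{\rm r}=\mu_1\exp(-\chi/\rho_{11})$ with $\mu_1,\rho_{11}$ as in the statement.

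The final step is purely algebraic. After rewriting $\tP_{\rm out1}^{\rm UE2}=1-\mu_2 e^{-\chi/\rho_{12}}$ and, upon folding $I_{\tilde X},I_{\tilde Y}$ into $\rho_{\tilde X},\rho_{\tilde Y}$, $\tilde\tP_{\rm out2}^{\rm UE2}=1-e^{-\chi/\rho_{\tilde X}}-\rho_{\tilde X}^{-1}e^{(\lambda_{\rm P}-\bar\gamma_2)/\rho_{\tilde Y}}\Theta(\chi)$, I would substitute these together with $p_{\rm r}$ and expand the two products. The terms $-p_{\rm r}$ (from $(1-p_{\rm r})\cdot 1$) and $+p_{\rm r}$ (from $p_{\rm r}\cdot 1$) cancel, and collecting the survivors reproduces exactly the five terms of (\ref{Eqn0:PO_UE2_Ovr}). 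I do not expect a genuine obstacle here; the only care needed is the bookkeeping in this expansion and the explicit justification of the independence-based factorization, since everything else is inherited from Lemmas~\ref{Thm:PO_UE2_DF}--\ref{Thm:PO_UE2_Apprx} and Theorem~\ref{Thm:PO_UE1}.
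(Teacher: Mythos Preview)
Your proposal is correct and follows essentially the same route as the paper: the paper also conditions on whether UE1 decodes $s_2$, writes $\tilde\tP_{\rm ovr}^{\rm UE2}=\tP_{\rm un}^{\rm UE1}\tP_{\rm out1}^{\rm UE2}+(1-\tP_{\rm un}^{\rm UE1})\tilde\tP_{\rm out2}^{\rm UE2}$ with $\tP_{\rm un}^{\rm UE1}=1-p_{\rm r}=1-\mu_1 e^{-\chi/\rho_{11}}$, and then substitutes and simplifies. If anything, your write-up is more explicit than the paper's, since you spell out the independence justification and the term-by-term expansion that the paper compresses into ``through simplifications.''
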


\begin{proof}
See Appendix~\ref{Appendix:PO_UE2_Ovr}.
\end{proof}

\subsection{Asymptotic characteristics for CNOMA users}\label{PA_UE2}
 Since we have obtained the analytical expressions of outage probability of the two users, we then carry on a further study on  asymptotic characteristics of $\tP_{\rm out}^{\rm UE1}$ and $\tilde \tP_{\rm out2}^{\rm UE2}$ in the high SNR regimes. In specific,  assuming that $\rho _{11}$ and $\rho _{12}$ both grow to infinity while $\rho _3  = c_3 \rho _{12}$ with a constant ratio of $c_3$, the asymptotic characteristics of the outage probability of both users are discussed under two different CSI estimation error assumptions. The analytical results are shown in the following theorem.

\begin{theorem}\label{Thm:Asymp_Characs}
When $\sigma _{e_i }^2$ is a constant, i.e., $ {\sigma _{e_i }^2  = \sigma _c^2 }$, $\forall i $, the outage probability of the two users both approach performance floors in the high SNR regimes, which are given by
\begin{align}
&\mathop {\lim }\limits_{\scriptstyle \rho _{11}  \to \infty  \hfill}
\tP_{\rm out}^{\rm UE1}\approx
 \frac{{\left( {1 + \lambda _P } \right)\chi _M \frac{{\sigma _c^2 }}{{\hat \sigma _1^2 }}}}{{1 + \left( {1 + \lambda _P } \right)\chi _M \frac{{\sigma _c^2 }}{{\hat \sigma _1^2 }}}}\label{Asymp_Pout_UE1fix},\\
 &\mathop {\lim }\limits_{\scriptstyle \rho _{12}  \to \infty  \hfill}
\tilde\tP_{\rm out}^{\rm UE2}\approx
1 - \mu _2^c  + \mu _1^c \mu _2^c  - \mu _1^c \exp \left( {\tilde \varepsilon _c } \right)\label{Asymp_Pout_UE2fix}
\end{align}
\begin{figure*}
\begin{equation}\label{Eqn0:Esp_Asym}
\tilde \varepsilon _c  = \ln \left[ {\exp \left( { - \frac{{\sigma _c^2 }}{{\hat \sigma _2^2 }}\left( {1 + \lambda _P } \right)\chi } \right) + \frac{{\sigma _c^2 }}{{\hat \sigma _2^2 }}\left( {1 + \lambda _P } \right)\exp \left( { - \frac{{\sigma _c^2 }}{{\hat \sigma _3^2 }}\frac{{\lambda _P }}{{1 + \chi }}} \right)\int_0^\chi  {\exp \left( { - \frac{{\sigma _c^2 }}{{\hat \sigma _2^2 }}\left( {1 + \lambda _P } \right)u - \frac{{\sigma _c^2 }}{{\hat \sigma _3^2 }}\frac{{\lambda _P }}{{1 + u}}} \right)du} } \right]
\end{equation}
 \hrulefill
\end{figure*}
with  $\tilde \varepsilon _c $ in (\ref{Eqn0:Esp_Asym}) (at top of next page) and $\mu _i^c  = \left( {1 + \chi \left( {1 + \lambda _P } \right)\frac{{\sigma _c^2 }}{{\hat \sigma _i^2 }}} \right)^{ - 1}$, $i = 1,2$.\\
Besides, when $\sigma _{e_i }^2$ is inversely proportional to the received SNR, i.e., ${\sigma _{e_i }^2  = \eta \frac{{N_0 }}{{P_i \sigma _i^2 }}}$, $\forall i $, the outage probability of the two users are given by
\begin{align}
&\mathop {\lim }\limits_{\scriptstyle \rho _{11}  \to \infty  \hfill}
\tP_{\rm out}^{\rm UE1}\approx
{\frac{{\chi _M }}{{\rho _{11} }}\left( {\left( {1 + \lambda _P } \right)\frac{\eta }{{\hat \sigma _1^2 }} + 1} \right)}
\label{Asymp_Pout_UE1var},\\
 &\mathop {\lim }\limits_{\scriptstyle \rho _{12}  \to \infty  \hfill}
\tilde\tP_{\rm out}^{\rm UE2}\approx
\rho_{12}^{-2}\left( {\frac{{\eta \left( {1 + \lambda _P } \right)}}{{\lambda _P \hat \sigma _2^2 }} + 1} \right)\nonumber \\&
  \cdot \left[ {\left( {\frac{{\eta \left( {1 + \lambda _P } \right)}}{{\lambda _P \hat \sigma _2^2 }} + 1} \right)\frac{{\chi ^2 }}{2} - \frac{{\lambda _P }}{{c_3 }}\left( {\frac{\eta }{{\hat \sigma _3^2 }} + 1} \right)\ln \left( {1 + \chi } \right)} \right]\label{Asymp_Pout_UE2var}.
\end{align}

It can be easily proved  from (\ref{Asymp_Pout_UE1fix}) and (\ref{Asymp_Pout_UE2fix}) that there are performance floors for the outage probability of both CNOMA users in the high SNR regimes when the CSI estimation error variance keeps constant. Meanwhile, we can observe that the performance floors are independent of the received SNRs, and the floors increase with the error variance.
However, there is no such performance bottleneck in the case of variable  variance of CSI estimation error, and it is indicated from (\ref{Asymp_Pout_UE1var}) and (\ref{Asymp_Pout_UE2var})  that the outage probability of the two users both decrease with the increase of SNR. Moreover, it can be proved   from (\ref{Asymp_Pout_UE2var}) that the diversity order for UE2 is $2$, which means the diversity gain is fully acquired by UE2 under variable error variance.
\end{theorem}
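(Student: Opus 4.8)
The plan is to substitute the two channel-estimation-error models directly into the closed-form outage expressions of Theorem~\ref{Thm:PO_UE1} and Theorem~\ref{Thm:PO_UE2_Ovr} and then read off the high-SNR behaviour. The unifying observation is that, once $\rho_3=c_3\rho_{12}$ and $\rho_{11}/\rho_{12}=\hat\sigma_1^2/\hat\sigma_2^2$ are held fixed, every argument in those formulas is controlled by the single growing quantity $\rho_{12}$: the exponents $\chi/\rho_{11}$ and $\chi/\rho_{12}$, as well as the composite SNRs $\rho_{\tilde X}=P_1\hat\sigma_2^2/I_{\tilde X}$ and $\rho_{\tilde Y}=P_3\hat\sigma_3^2/I_{\tilde Y}$, all scale with $\rho_{12}$, while the error ratios $\sigma_{e_i}^2/\hat\sigma_i^2$ inside $\mu_1,\mu_2$ are frozen in the first model and of order $\rho_{12}^{-1}$ in the second. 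In the variable case I would also use $\sigma_i^2=\hat\sigma_i^2+\sigma_{e_i}^2\to\hat\sigma_i^2$, which lets me replace $\sigma_i^2$ by $\hat\sigma_i^2$ in every leading coefficient.

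For constant variance the argument is a plain limit. In (\ref{Eq0:PO_UE1}) the exponential $\exp(-\chi_M/\rho_{11})\to1$ while $\sigma_{e_1}^2/\hat\sigma_1^2=\sigma_c^2/\hat\sigma_1^2$ stays fixed, so putting $1-(1+\cdots)^{-1}$ over a common denominator gives the floor (\ref{Asymp_Pout_UE1fix}). For UE2 the key point is that with $\sigma_{e_i}^2=\sigma_c^2$ the interference powers $I_{\tilde X},I_{\tilde Y}$ grow linearly with the transmit powers, so $\rho_{\tilde X}\to\hat\sigma_2^2/[\sigma_c^2(1+\lambda_{\rm P})]$ and $\rho_{\tilde Y}\to\hat\sigma_3^2/\sigma_c^2$ tend to positive constants rather than to infinity. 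Letting $\rho_{11},\rho_{12}\to\infty$ in (\ref{Eqn0:PO_UE2_Ovr}) then sends each $\exp(-\chi/\rho_{1i})\to1$ and $\mu_i\to\mu_i^c$, whereas the $\rho_{\tilde X}$- and $\rho_{\tilde Y}$-dependent terms, together with the limit of the integral $\Theta(\chi)$, converge to the nonzero constant that (\ref{Eqn0:Esp_Asym}) packages as $\exp(\tilde\varepsilon_c)$; here the identity $\lambda_{\rm P}/(1+\chi)=\lambda_{\rm P}-\bar\gamma_2$ reconciles the two ways of writing the relay exponent. Collecting the four surviving pieces yields (\ref{Asymp_Pout_UE2fix}).

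The variable-variance case is the substantive one, since the leading orders must cancel before the diversity order becomes visible. For UE1, a first-order Taylor expansion $(1+a/\rho_{11})^{-1}\exp(-\chi_M/\rho_{11})=1-(a+\chi_M)/\rho_{11}+O(\rho_{11}^{-2})$ with $a=\chi_M(1+\lambda_{\rm P})\eta/\hat\sigma_1^2$ gives (\ref{Asymp_Pout_UE1var}) at once. For UE2 I would expand (\ref{Eqn0:PO_UE2_Ovr}) to second order in $\rho_{12}^{-1}$: each $\mu_i=1+O(\rho_{12}^{-1})$, each of $\exp(-\chi/\rho_{1i})$, $\exp(-\chi/\rho_{\tilde X})$, $\exp((\lambda_{\rm P}-\bar\gamma_2)/\rho_{\tilde Y})$ is carried to quadratic order, and, most delicately, the integral is expanded through $\Theta(\chi)=\chi-\chi^2/(2\rho_{\tilde X})-\lambda_{\rm P}\ln(1+\chi)/\rho_{\tilde Y}+O(\rho_{12}^{-2})$ by Taylor-expanding its integrand and integrating term by term. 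One then verifies that the $O(1)$ and $O(\rho_{12}^{-1})$ coefficients of the assembled sum vanish identically; substituting $1/\rho_{\tilde X}\to[\eta(1+\lambda_{\rm P})/(\lambda_{\rm P}\hat\sigma_2^2)+1]/\rho_{12}$ and $1/\rho_{\tilde Y}\to[\eta/\hat\sigma_3^2+1]/(c_3\rho_{12})$ into the surviving $O(\rho_{12}^{-2})$ coefficient produces (\ref{Asymp_Pout_UE2var}). Because the remaining term scales as $\rho_{12}^{-2}$, the diversity order of UE2 is exactly $2$.

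I expect the second-order bookkeeping for UE2 to be the main obstacle, precisely because the zeroth- and first-order terms must cancel and a single algebra slip would destroy the $\rho_{12}^{-2}$ scaling. A cleaner organisation uses the decomposition behind (\ref{Eqn0:PO_UE2_Ovr}) from Appendix~\ref{Appendix:PO_UE2_Ovr}, namely $\tilde\tP_{\rm ovr}^{\rm UE2}=(1-\mu_1 e^{-\chi/\rho_{11}})(1-\mu_2 e^{-\chi/\rho_{12}})+\mu_1 e^{-\chi/\rho_{11}}\,\tilde\tP_{\rm out2}^{\rm UE2}$, which one checks reproduces (\ref{Eqn0:PO_UE2_Ovr}) after the standalone $\mu_1 e^{-\chi/\rho_{11}}$ terms cancel. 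Each factor $1-\mu_i e^{-\chi/\rho_{1i}}$ is itself $O(\rho_{12}^{-1})$, so their product is automatically $O(\rho_{12}^{-2})$, while $\mu_1 e^{-\chi/\rho_{11}}\to1$ multiplies $\tilde\tP_{\rm out2}^{\rm UE2}$, whose own expansion must begin at $O(\rho_{12}^{-2})$ because its two leading terms cancel against each other. Organising the proof this way makes the low-order cancellations structural rather than accidental, isolates the contributions of the direct link (through $\rho_{\tilde X}$) and the relay link (through $\rho_{\tilde Y}$), and is the most transparent way to certify that the diversity gain of $2$ is fully achieved by UE2.
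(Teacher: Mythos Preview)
Your proposal is correct and follows essentially the same approach as the paper's own proof: substitute each error model into the closed-form expressions (\ref{Eq0:PO_UE1}) and (\ref{Eqn0:PO_UE2_Ovr}), then take direct limits in the constant-variance case and Taylor-expand in the variable-variance case, discarding higher-order terms. Your final paragraph, which reorganises the UE2 expansion via the decomposition $\tilde\tP_{\rm ovr}^{\rm UE2}=\tP_{\rm un}^{\rm UE1}\cdot\tP_{\rm out1}^{\rm UE2}+(1-\tP_{\rm un}^{\rm UE1})\tilde\tP_{\rm out2}^{\rm UE2}$ from Appendix~\ref{Appendix:PO_UE2_Ovr}, is a useful refinement that makes the zeroth- and first-order cancellations structural; the paper simply states that the small terms are omitted without exhibiting this mechanism.
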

\begin{proof}
%Firstly, we substitute $ {\sigma _{e_1 }^2  = \sigma _c^2 }$ according to constant CSI estimation error, into (\ref{Eq0:PO_UE1}) and omit the  small terms in (\ref{Eq0:PO_UE1}) for $\rho _{{{11}}}  \to \infty $. Then by skipping the tedious details of simplification, we obtain (\ref{Asymp_Pout_UE1fix}) for UE1.  In a similar way, we substitute ${\sigma _{e_1 }^2  = \eta \frac{{N_0 }}{{P_1 \sigma _1^2}}}$ according to variable CSI estimation error, into (\ref{Eq0:PO_UE1}) and we finally obtain (\ref{Asymp_Pout_UE1var}).
%Secondly, we substitute $ {\sigma _{e_i }^2  = \sigma _c^2 }$, for $i=2,3$ according to constant CSI estimation error, into  (\ref{Eqn0:PO_UE2_Ovr}). Then we apply the Taylor's expansions of the logarithmic and the exponential functions to recast (\ref{Eqn0:PO_UE2_Ovr}) and omit the small terms in (\ref{Eqn0:PO_UE2_Ovr}) for $\rho _{{{12}}}  \to \infty $. Skipping the tedious details of derivation, we finally obtain (\ref{Asymp_Pout_UE2fix})  for UE2.
%Similarly,  we substitute ${\sigma _{e_i }^2  = \eta \frac{{N_0 }}{{P_i \sigma _i^2}}}$, for $i=2,3$ according to variable CSI estimation error, into  (\ref{Eqn0:PO_UE2_Ovr}), and we finally obtain (\ref{Asymp_Pout_UE2var}).
Depending on the different types of CSI estimation errors, we substitute $ {\sigma _{e_1 }^2  = \sigma _c^2 }$ or ${\sigma _{e_1 }^2  = \eta \frac{{N_0 }}{{P_1 \sigma _1^2}}}$, respectively into (\ref{Eq0:PO_UE1}) and omit the  the small terms in them for $\rho _{{{11}}}  \to \infty $. Then by skipping the details of simplification, we obtain (\ref{Asymp_Pout_UE1fix}) and (\ref{Asymp_Pout_UE1var}) for UE1.
In the sequel, we substitute $ {\sigma _{e_i }^2  = \sigma _c^2 }$ or ${\sigma _{e_i }^2  = \eta \frac{{N_0 }}{{P_i \sigma _i^2}}}$, respectively into (\ref{Eqn0:PO_UE2_Ovr}), for $i=2,3$. Then we apply the Taylor's expansions on the logarithmic and exponential functions in them and omit the small terms for $\rho _{{{12}}}  \to \infty $. Finally, we obtain (\ref{Asymp_Pout_UE2fix}) and (\ref{Asymp_Pout_UE2var}) for UE2.
%See Appendix~\ref{Appendix:Asymp_Characs}.
\end{proof}

%\begin{remark}\label{Remark:Pout_diversity}
%It can be easily proved that the diversity order for UE2 is $d=2$ based on (\ref{Asymp_Pout_DF}) (when $\rho_{11},\rho_3\to \infty$), which means full diversity gain is obtained for the considered CNOMA system.
%\end{remark}
%\begin{remark}\label{Remark:Pout_DF_monotonic}
%
%\end{remark}

\section{Numerical Results}\label{Numerical_Results}

\begin{figure}[htbp]
\setlength{\abovecaptionskip}{0.cm}
\setlength{\belowcaptionskip}{-0.cm}
\centering
\includegraphics[scale=0.60]{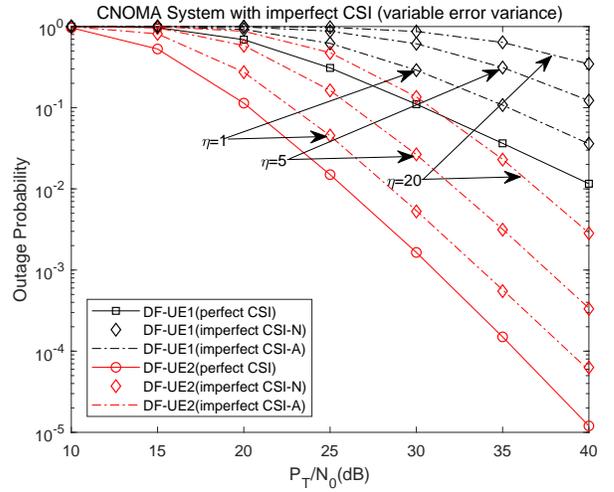}
\caption{The outage probabilities for two users with variable variance of CSI estimation error.}
\label{fig:result_1}
\end{figure}

\begin{figure}[htbp]
\setlength{\abovecaptionskip}{0.cm}
\setlength{\belowcaptionskip}{-0.cm}
\centering
\includegraphics[scale=0.60]{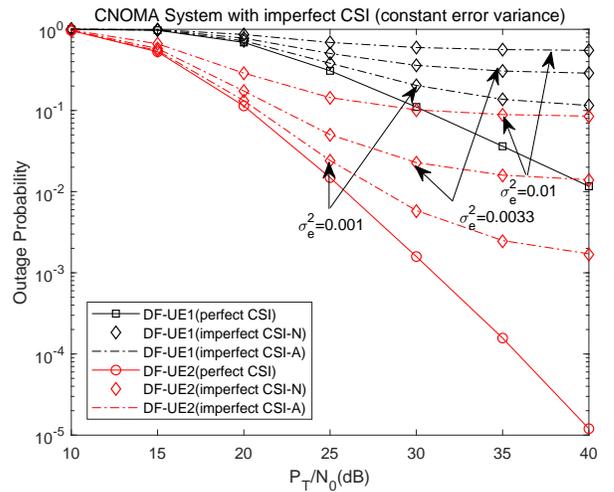}
\caption{The outage probabilities for two users with constant variance of CSI estimation error.}
\label{fig:result_2}
\end{figure}

In this section, the outage probability of the CNOMA system is evaluated based on Monte-Carlo simulations averaging over $10^6$ independent channel realizations, considering different levels of channsl estimation error variances.
Furthermore, the asymptotic characteristics of the outage probability of the users are shown in the high SNRs.
The system performance with perfect CSI is also included as a benchmark.
In specific, the adopted simulation system parameters are set as follows.
The data requirements of UE1 and UE2 are set as $R_1=1.5$~bit/Hz/s and $R_2=1$~bit/Hz/s, respectively. The variances of channels fading for BS-UE1, BS-UE2, and UE1-UE2 are set as $\sigma^2_1=0.36$, $\sigma^2_2=0.16$, and $\sigma^2_3=0.64$, respectively. In addition, the power of noises are all set as $N_0=1$,
 and the parameters of the transmit powers are set by $\lambda_P =P_2/P_1=5$ and $P_3 = P_{\rm T}-5$~dB.
  In all the figures, the numerical results of outage probability are labeled '-N', and the analytical results of outage probability of UE1 (in (\ref{Eq0:PO_UE1})) and UE2 (in (\ref{Eqn0:PO_UE2_Ovr})) are labeled '-A'. We set $\sigma _{e_i }^2  = \eta N_0 \left( {P_i \sigma _i^2 } \right)^{ - 1}$  with $\eta  = 1,5$ and $20$ for the case of variable CSI estimation error variance, and set $\sigma _{e_i }^2  = 1 \times 10^{-3},3.3\times 10^{-3}$ and $ 1\times 10^{-2}$ for the case of constant CSI estimation error variance, for $i = 1,2,3$.

In Fig.~\ref{fig:result_1},  the outage probability of the two users under variable CSI estimation error variance are shown as a function of $\eta$. %, the factor of variable CSI estimation error.
An excellent agreement between the analytical and the Monte Carlo simulation results  of outage probability of both users, which verifies the validity of the approximation for the  outage probability of UE2.
Compared with the CNOMA systems with perfect CSI, there is an increase of the outage probability of both users with imperfect CSI and the gap between them increases with $\eta$.
%We can see from Fig.~\ref{fig:result_1} that there is a gap between the outage probabilities with perfect CSI estimation and that with CSI estimation error, and the gap increases with the enhance of CSI estimation error.
In addition, it is observed that the curves of the same user show the same slope degree (in log-scale) in the moderate-to-high SNR range, which indicates the outage performance with variable CSI estimation error variance can achieve the same diversity order with that of perfect CSI estimation, as been proved in Theorem~\ref{Thm:Asymp_Characs}.

 {Fig.~\ref{fig:result_2} shows the variation of the outage probability of two CNOMA users under constant CSI estimation error variance, following the variation of %constant CSI estimation error variance
 $\sigma _{e_i }^2 $.
The analytical results of the outage probability of both users in Fig~\ref{fig:result_2} show an excellent agreement with the Monte-Carlo simulation results of both users.
It is also observed from Fig.~\ref{fig:result_2} that there is an increase of the outage probability of both users with imperfect CSI in comparison with that with perfect CSI, while the gap between them increases with $\sigma _{e_i }^2 $.
Moreover, it is indicated from Fig.~\ref{fig:result_2} that there is a  floor for outage probability of both users under constant variance of CSI estimation error in high SNR regimes. This means that the constant variance of CSI estimation error causes a bottleneck of the outage performance in CNOMA systems, which validates the conclusion of Theorem~\ref{Thm:Asymp_Characs}.
}

\section{Conclusion}

In this paper,  we have studied the impact of imperfect CSI on the outage probability of downlink two-user CNOMA networks in which the near user acts as a DF relay for assisting the far user. For two different types of imperfect CSI estimation when the variance of the estimation error keeps constant value or decreases linearly with received SNR, the closed-form expressions of the outage probability of the two users were  derived.  Based on that, the asymptotic behaviors  of the outage probability of both users were investigated.
It is shown that the outage probability of both users under constant variance of errors approaches a performance bottleneck, even when the received SNRs are sufficiently high.
Meanwhile, there is no such performance bottleneck for the outage probability  of the two users when the error variance improves with the received SNRs. It is also shown that the diversity order of both users for the case of variable error variance are identical with that under perfect CSI.

\appendices
\section{Proof of Theorem\ref{Thm:PO_UE1}}\label{Appendix:PO_UE1}
Firstly, we note that $\tP_{\rm out}^{\rm UE1} $ in (\ref{Def:PO_UE1}) can be rewritten as (\ref{EqnA1:PO_UE1}) (at top of next page)
\begin{figure*}
\begin{small}
\begin{equation}\label{EqnA1:PO_UE1}
 \tP_{\rm out}^{\rm UE1}  = \tP\left\{ {\frac{{\left| {\hat h_1 } \right|^2 P_2 }}{{\left| {\hat h_1 } \right|^2 P_1  + \left| {e_1 } \right|^2 \left( {P_1  + P_2 } \right) + N_0 }} < \bar \gamma _2 {\mbox{ or }}\frac{{\left| {\hat h_1 } \right|^2 P_1 }}{{\left| {e_1 } \right|^2 \left( {P_1  + P_2 } \right) + N_0 }} < \bar \gamma _1 } \right\},
\end{equation}
\end{small}
 \hrulefill
\end{figure*}
and further recast into
\begin{equation}\label{EqnA2:PO_UE1}
 \tP_{\rm out}^{\rm UE1}  =  \tP\left\{ {\left| {\hat h_1 } \right|^2  < \beta _1 \left( {\left| {e_1 } \right|^2 \left( {P_1  + P_2 } \right) + N_0 } \right)} \right\}
\end{equation}
with $\beta _1  = \max \left\{ {\frac{{\bar \gamma _2 }}{{P_2  - P_1 \bar \gamma _2 }},\frac{{\bar \gamma _1 }}{{P_1 }}} \right\}$.\\ %when $\frac{{P_2 }}{{P_1 }} > \bar \gamma _2 $ is assumed.
As $\hat h_1  \sim \mC\mN\left( {0,\hat \sigma _1^2 } \right)$, $e_1  \sim \mC\mN\left( {0,\sigma _{e_1 }^2 } \right)$,
we can use the probability distribution function (PDF) of $\left| \hat {h_1 } \right|^2 $
, which is
\begin{align}\label{PDF_hat_h1}
&f_{\left| {\hat h_1 } \right|^2 } \left( x \right) = \frac{1}{{\hat \sigma _1^2 }}\exp \left( { - \frac{x}{{\hat \sigma _1^2 }}} \right),
\end{align}
and the PDF of ${\left| {e_1 } \right|^2 }$, which is
\begin{align}\label{PDF_e1}
&f_{\left| {e_1 } \right|^2 } \left( x \right) = \frac{1}{{\sigma _{e_1 }^2 }}\exp \left( { - \frac{x}{{\sigma _{e_1 }^2 }}} \right)
,
\end{align}
to obtain the expression of (\ref{EqnA2:PO_UE1}).
Skipping the tedious details for derivation,
the final expression of $\tP_{\rm out}^{\rm UE1} $ is given by (\ref{Eq0:PO_UE1})
, which completes the proof.

\section{Proof of Lemma~\ref{Thm:PO_UE2_DF} }\label{Appendix:PO_UE2_DF}

Firstly,
 we consider Case 1: When  $s_2$ was not detected by UE1, the expression of $\tP_{\rm out1}^{\rm UE2} $ is given by
 \begin{equation}\label{EqnA1:PO_UE2_DF}
 \tP_{\rm out1}^{\rm UE2}  = \tP\left\{ {\frac{{\left| {\hat h_2 } \right|^2 P_2 }}{{\left| {\hat h_2 } \right|^2 P_1  + \left| {e_2 } \right|^2 \left( {P_1  + P_2 } \right) + N_0 }} < \bar \gamma _2 } \right\}
 \end{equation}
and further recast into
 \begin{equation}\label{EqnA2:PO_UE2_DF}
 \tP_{\rm out1}^{\rm UE2}  = \tP\left\{ {\left| {\hat h_2 } \right|^2  < \frac{\chi }{{P_1 }}\left( {\left| {e_2 } \right|^2 \left( {P_1  + P_2 } \right) + N_0 } \right)} \right\}.
 \end{equation}
We note that (\ref{EqnA2:PO_UE2_DF}) has a similar form with (\ref{EqnA2:PO_UE1}), and it can be easily to obtain  the expression of $\tP_{\rm out1}^{\rm UE2}$ in (\ref{Eqn1:Pout_UE2_1}).

Next,  we consider Case 2: When  $s_2$ was successfully detected by UE1 and then transmitted to UE2,
 the  outage probability of UE2 is given by
% \begin{equation}\label{EqnB1:PO_UE2_DF}
% \tP_{\rm out2}^{\rm UE2}  = \tP\left\{ {\frac{{\left| {h_2 } \right|^2 P_2 }}{{\left| {h_2 } \right|^2 P_1  + N_0^{} }} + \frac{{\left| {h_3 } \right|^2 P_3 }}{{N_0 }} < \bar \gamma _2 } \right\},
% \end{equation}
%and further recast into
\begin{equation}\label{EqnB2:PO_UE2_DF}
 \tP_{\rm out2}^{\rm UE2}  =  \tP\left\{ {X + Y < \bar \gamma _2 } \right\}=\int_0^{\bar \gamma _2 } {f_Y \left( y \right)} F_X \left( {\bar \gamma _2  - y} \right)dy
 \end{equation}
 with $ X = \frac{{\left| {\hat h_2 } \right|^2 P_2 }}{{\left| {\hat h_2 } \right|^2 P_1  + \left| {e_2 } \right|^2 \left( {P_1  + P_2 } \right) + N_0 }}$, $Y = \frac{{\left| {\hat h_3 } \right|^2 P_3 }}{{\left| {e_3 } \right|^2 P_3  + N_0 }}$.\\
With $\hat h_i  \sim CN\left( {0,\hat \sigma _i^2 } \right)$, $e_i  \sim CN\left( {0,\sigma _{e_i }^2 } \right)$, $i=2,3$,  the cumulative  distribution function (CDF) of $X$ and the PDF of $Y$ are given by (\ref{EqnA:PDF_X}) and (\ref{EqnA:PDF_Y}) (at top of next page), respectively.
%\begin{small}
\begin{figure*}
\begin{align}
&F_X \left( x \right) = \left\{ {\begin{array}{*{20}c}
   {1 - \left( {1 + \frac{{x\left( {1 + \lambda _{\rm P} } \right)}}{{\lambda _{\rm P} - x}}\frac{{\sigma _{e_2 }^2 }}{{\hat \sigma _2^2 }}} \right)^{ - 1} \exp \left( { - \frac{x}{{\lambda _{\rm P} - x}}\frac{1}{{\rho _{12} }}} \right),\quad 0 \le x < \lambda _{\rm P} }  \\
   {1,\quad x > \lambda _{\rm P} }  \\
\end{array}} \right.
,\label{EqnA:PDF_X}\\
& f_Y \left( y \right) = \left[ {\frac{1}{{\rho _3 }}\left( {1 + \frac{{\sigma _{e_3 }^2 }}{{\hat \sigma _3^2 }}y} \right)^{ - 1}  + \frac{{\sigma _{e_3 }^2 }}{{\hat \sigma _3^2 }}\left( {1 + \frac{{\sigma _{e_3 }^2 }}{{\hat \sigma _3^2 }}y} \right)^{ - 2} } \right]\exp \left( { - \frac{y}{{\rho _3 }}} \right), \quad y \ge 0.\label{EqnA:PDF_Y}
 \end{align}
  \hrulefill
\end{figure*}
 By substituting (\ref{EqnA:PDF_X}) and (\ref{EqnA:PDF_Y}) into (\ref{EqnB2:PO_UE2_DF}), we obtain $ \tP_{\rm out2}^{\rm UE2} $ in (\ref{Eqn1:Pout_UE2_2})
% \begin{figure*}
%   \begin{align}\label{EqnB3:PO_UE2_DF}
%&\tP\left\{ {X + Y < \bar \gamma _2 } \right\} = \int_0^{\bar \gamma _2 } {f_Y \left( y \right)} F_X \left( {\bar \gamma _2  - y} \right)dy%\nonumber\\&
%  = 1 - \left( {1 + \frac{{\sigma _{e_3 }^2 }}{{\hat \sigma _3^2 }}\bar \gamma _2 } \right)^{ - 1} \exp \left( { - \frac{{\bar \gamma _2 }}{{\rho _3 }}} \right) \nonumber\\
% &-\int_0^{\bar \gamma _2 } {\left[ {\frac{1}{{\rho _3 }}\left( {1 + \frac{{\sigma _{e_3 }^2 }}{{\hat \sigma _3^2 }}y} \right)^{ - 1}  + \frac{{\sigma _{e_3 }^2 }}{{\hat \sigma _3^2 }}\left( {1 + \frac{{\sigma _{e_3 }^2 }}{{\hat \sigma _3^2 }}y} \right)^{ - 2} } \right]\left( {1 + \frac{{\left( {\bar \gamma _2  - y} \right)\left( {1 + \lambda _{\rm P} } \right)}}{{\lambda _{\rm P} - \bar \gamma _2  + y}}\frac{{\sigma _{e_2 }^2 }}{{\hat \sigma _2^2 }}} \right)^{ - 1} } \left[ {\exp \left( { - \frac{{\left( {\bar \gamma _2  - y} \right)}}{{\lambda _{\rm P} - \bar \gamma _2  + y}}\frac{1}{{\rho _{12} }} - \frac{y}{{\rho _3 }}} \right)} \right]dy
% \end{align}
%   \hrulefill
%\end{figure*}
, which completes the proof.

\section{Proof of Lemma~\ref{Thm:PO_UE2_Apprx}}\label{Appendix:PO_UE2_Apprx}
Firstly, the  outage probability of UE2 is approximated by %(also is an upper bound)
\begin{align}\label{EqnB2:PO_UE2_Approx}
 \tilde \tP_{\rm out2}^{\rm UE2}  =  \tP\left\{ {\tilde X + \tilde Y < \bar \gamma _2 } \right\}%\nonumber\\
=\int_0^{\bar \gamma _2 } {f_{\tilde Y} \left( y \right)} F_{\tilde X} \left( {\bar \gamma _2  - y} \right)dy
 \end{align}
 with $ \tilde X = \frac{{\left| {\hat h_2 } \right|^2 P_2 }}{{\left| {\hat h_2 } \right|^2 P_1  + \sigma _{e_2 }^2 \left( {P_1  + P_2 } \right) + N_0 }}$, $\tilde Y = \frac{{\left| {\hat h_3 } \right|^2 P_3 }}{{\sigma _{e_3 }^2 P_3  + N_0 }}$, %.\\
%With $\hat h_i  \sim CN\left( {0,\hat \sigma _i^2 } \right)$, $i=2,3$,
whilst the CDF of $\tilde X$ and the PDF of $\tilde Y$ are given as follows.

\begin{align}
&
F_{\tilde X} \left( x \right) = \left\{ {\begin{array}{*{20}c}
   {1 - \exp \left( { - \frac{{I_{\tilde X} x}}{{\hat \sigma _2^2 \left( {P_2  - P_1 x} \right)}}} \right),{\rm{0}} \le x < \lambda _{\rm P} }  \\
   {1,x \ge \lambda _{\rm P} }  \\
\end{array}} \right.
,\label{EqnA:CDF_TX}\\
& f_{\tilde Y} \left( y \right) = \frac{{I_{\tilde Y} }}{{P_3 \hat \sigma _3^2 }}\exp \left( { - \frac{{I_{\tilde Y} }}{{P_3 \hat \sigma _3^2 }}y} \right),y \ge 0\label{EqnA:PDF_TY}
 \end{align}
with $I_{\tilde X}  = \sigma _{e_2 }^2 \left( {P_1  + P_2 } \right) + N_0 $ and $I_{\tilde Y}  = \sigma _{e_3 }^2 P_3  + N_0$.\\
 By substituting (\ref{EqnA:CDF_TX}) and (\ref{EqnA:PDF_TY}) into (\ref{EqnB2:PO_UE2_Approx}), finally the expression of $\tilde P_{out}^{UE2}$ is given by (\ref{Eqn:PO_UE2_Apprx})
, which completes the proof.
% Moreover, it can be easily derived that $F_X \left( x \right) \le F_{\tilde X} \left( x \right)$ and $F_Y \left( y \right) \le F_{\tilde Y} \left( y \right)$, therefore we obtain $P_{out2}^{UE2}  = P\left\{ {X + Y < \bar \gamma _2 } \right\} \le P\left\{ {\tilde X + \tilde Y < \bar \gamma _2 } \right\} = \tilde P_{out}^{UE2}$, which prove that $\tilde P_{out}^{UE2}$ is an upper bound of $P_{out2}^{UE2} $.

\section{Proof of Theorem \ref{Thm:PO_UE2_Ovr}}\label{Appendix:PO_UE2_Ovr}
Firstly, the approximation of the overall  outage probability of UE2 is expressed as
\begin{equation}\label{proof:PO_UE2_Ovr}
 \tilde \tP_{\rm ovr}^{\rm UE2}  = \tP_{\rm un}^{\rm UE1}  \cdot \tP_{\rm out1}^{\rm UE2}  + \left( {1 - \tP_{\rm un}^{\rm UE1} } \right)\tilde \tP_{\rm out2}^{\rm UE2}
\end{equation}
where $\tP_{\rm un}^{\rm UE1}$ represents the probability of unsuccessfully decoding $s_2$ at UE1.
Then, $\tP_{\rm un}^{\rm UE1}$ is recast into
\begin{align}\label{Proof:un_Pout_UE1}
\tP_{\rm un}^{\rm UE1} = \tP\left\{ {\left| {\hat h_1 } \right|^2  < \frac{\chi }{{P_1 }}\left( {\left| {e_1 } \right|^2 \left( {P_1  + P_2 } \right) + N_0 } \right)} \right\}.
\end{align}
Due to the similarity between (\ref{EqnA2:PO_UE2_DF}) and (\ref{Proof:un_Pout_UE1}), it can be easily obtained that
\begin{align}\label{Proof1:un_Pout_UE1}
\tP_{\rm un}^{\rm UE1} = 1 - \left( {1 + \chi \left( {1 + \lambda _{\rm P} } \right)\frac{{\sigma _{e_1 }^2 }}{{\hat \sigma _1^2 }}} \right)^{ - 1} \exp \left( { - \frac{\chi }{{\rho _{11} }}} \right).
\end{align}
Then by substituting (\ref{Proof1:un_Pout_UE1}), (\ref{Eqn1:Pout_UE2_2}) and (\ref{Eqn:PO_UE2_Apprx}) into  (\ref{proof:PO_UE2_Ovr}) and through simplifications, we finally obtain $ \tilde \tP_{\rm ovr}^{\rm UE2} $ in
(\ref{Eqn0:PO_UE2_Ovr}), which completes the proof.

% conference papers do not normally have an appendix

% use section* for acknowledgment
\section*{Acknowledgment}

The paper has been accepted by 2018  Information Technology and Mechatronics Engineering Conference (ITOEC 2018£©, 17th Sept. 2018.

% trigger a \newpage just before the given reference
% number - used to balance the columns on the last page
% adjust value as needed - may need to be readjusted if
% the document is modified later
%\IEEEtriggeratref{8}
% The "triggered" command can be changed if desired:
%\IEEEtriggercmd{\enlargethispage{-5in}}

% references section

% can use a bibliography generated by BibTeX as a .bbl file
% BibTeX documentation can be easily obtained at:
% http://mirror.ctan.org/biblio/bibtex/contrib/doc/
% The IEEEtran BibTeX style support page is at:
% http://www.michaelshell.org/tex/ieeetran/bibtex/
%\bibliographystyle{IEEEtran}
% argument is your BibTeX string definitions and bibliography database(s)
%\bibliography{IEEEabrv,../bib/paper}
%
% <OR> manually copy in the resultant .bbl file
% set second argument of \begin to the number of references
% (used to reserve space for the reference number labels box)
\bibliographystyle{IEEEtran}
\bibliography{IEEEabrv,reference,MIMO_Relay,NOMA-Relay}

% that's all folks
\end{document}